\newcommand{\vect}[1]{\boldsymbol{\mathbf{#1}}}
\newcommand{\real}{{\mathbb{R}}}
\newtheorem{theorem}{Theorem}[section]
\newtheorem{remark}{Remark}[section]
\newtheorem{lemma}[theorem]{Lemma}
\newtheorem{definition}{Definition}
\newcommand{\oprocendsymbol}{\hbox{$\bullet$}}
\newcommand{\oprocend}{\relax\ifmmode\else\unskip\hfill\fi\oprocendsymbol}
\newcommand{\VV}{\mathcal{V}}
\newcommand{\EE}{\mathcal{E}}
\newcommand{\GG}{\mathcal{G}}
\newcommand{\lL}{\vect{\mathsf{L}}}
\newcommand{\re}[1]{\operatorname{Re}(#1)}
\newcommand{\integernonnegative}{{\mathbb{Z}}_{\ge0}}
\newcommand{\review}[1]{{\color{black}#1}}
\newcommand{\vectsf}[1]{\boldsymbol{\mathbf{\mathsf{#1}}}}
\newcommand{\Diag}[1]{\operatorname{Diag}(#1)}
 \newcommand{\boxend}{\hfill \ensuremath{\Box}}
\title{Deterministic Privacy Preservation in \\Static Average Consensus Problem}
\author{%
  Amir-Salar~Esteki\\
  Dept. of Mechanical and Aerospace Eng.\\
  University of California, Irvine\\
  Irvine, CA 92617 \\
  \texttt{aesteki@uci.edu} \\
  \And
  Solmaz S.~Kia\\
  Dept. of Mechanical and Aerospace Eng.\\
  University of California, Irvine\\
  Irvine, CA 92617 \\
  \texttt{solmaz@uci.edu} \\
}
\begin{document}

\maketitle
\begin{abstract}
In this paper we consider the problem of privacy preservation in the static average consensus problem. This problem normally is solved by proposing privacy preservation augmentations for the popular first order Laplacian-based algorithm. These mechanisms however come with computational overhead, may need coordination among the agents to choose their parameters and also alter the transient response of the algorithm. In this paper we show that an alternative iterative algorithm that is proposed in the literature in the context of dynamic average consensus problem has intrinsic privacy preservation and can be used as a privacy preserving algorithm that yields the same performance behavior as the well-known Laplacian consensus algorithm but without the overheads that come with the existing privacy preservation methods.
\end{abstract}

\section{Introduction}
This paper considers the problem of privacy preservation in the in-network average consensus (PrivCon) problem, where the objective is to enable a group of $\mathcal{V}\!=\!\{1,\cdots,N\}$ communicating agents interacting over a strongly connected and weight-balanced digraph $\mathcal{G}(\VV,\EE,\vectsf{A})$\footnote{For graph theoretic definitions used hereafter see Section~\ref{sec::prob_setting}.}, see Fig.~\ref{fig::digraph}, to calculate  $\mathsf{r}^\text{avg}\!=\!\frac{1}{N}\sum_{j=1}^N \mathsf{r}^j$ using local interactions but without disclosing their local \emph{reference value} $\mathsf{r}^i$, $i\!\in\!\mathcal{V}$, to each other. The privacy preservation for average consensus is normally formalized as concealing the reference value of each agent from a malicious agent that is defined as~follows.
\begin{definition}
    [Malicious agent]\label{asm::knowledgesetadversary}{\rm
    A malicious agent in the network is an agent that wants to obtain the reference value of the other agents without perturbing/interrupting the execution of the average consensus algorithm. The knowledge set of this malicious agent consists of (a) the network topology  $\mathcal{G}(\VV,\EE,\vectsf{A})$, (b) its own local states and reference input, (c) the received signals from its out-neighbors, and (d) the agreement state $x^i(k)$ of each agent $i\in\VV$ converges asymptotically to $\mathsf{r}^{\text{avg}}$}.\boxend
\end{definition}
The solutions to the PrivCon problem in the literature mainly center around designing privacy preservation augmentation mechanisms for the popular average consensus algorithm
\begin{align}\label{eq::Alg1}
    &x^i(k+1)=x^i(k)-\Delta\sum\nolimits_{j=1}^{N}\vectsf{a}_{ij}(x^{i}(k)-x^{j}(k)),\\
    &x^i(0)=\mathsf{r}^i, \quad i\in\VV\nonumber,
\end{align}
which with proper choice of stepsize $\Delta$ guarantees $x^i\!\to\!\mathsf{r}^\text{avg}$, $i\in\VV$, as $k\!\to\!\infty$~\cite{olfati2004consensus}. %,ROS-JAF-RMM:07}. 
In a weighted digraph, $\vectsf{a}_{ij}>0$ if there is an edge from agent $i$ to agent $j$, i.e., agent $i$ can obtain information from agent $j$, otherwise $\vectsf{a}_{ij}=0$. Thus, algorithm~\eqref{eq::Alg1} requires each agent $i\in\VV$ to share its reference value $\mathsf{r}^i$ with its in-neighbors (agents that receive messages from $i$) in the first step of the algorithm, resulting in a trivial breach of privacy. Standard observability analysis \cite{kalman1960general} shows also that when the adjacency matrix $\vectsf{A}=[\vectsf{a
}_{ij}]$ of the network topology is known to an agent $i\in\VV$, agent $i$ can  obtain the initial condition of algorithm~\eqref{eq::Alg1}, and consequently, the reference value of all or some of the other agents of the network; see~\cite{pequito2014design} for details.

A popular approach explored to induce privacy preservation to algorithm~\eqref{eq::Alg1} is the encryption  where either a trusted third-party generates the public-key~\cite{hadjicostis2018privary} or agents share their keys in which the network is restricted to a point-to-point or tagged undirected communication framework~\cite{MR-HG-YW:19, yin2019accurate}. Also,~\cite{MR-HG-YW:19} requires extra communication channels for key generation purposes. Moreover, in~\cite{MR-HG-YW:19, yin2019accurate}, the network should be connected at the first step of communication and therefore, it does not have robustness to possible switching in the topology.
%\margin{double check the abstracts of these papers to make sure the review is accurate\amir{both \cite{MR-HG-YW:19} and \cite{yin2019accurate} is undirected}} %average consensus is not guaranteed for switching topologies. Moreover, ~\cite{MR-HG-YW:19} requires extra communication channels for key generation purposes.}\margin{not clear \amir{worked on the wording and flow}} 
Differential privacy~\cite{nozari2017differentially, huang2012differentially,nozari2015differentially} and additive obfuscation noise methods~\cite{manitara2013privacy, mo2016privacy,he2018privacy} are other popular techniques to conceal the \emph{exact} reference value of the agents, but the malicious agent can obtain an estimate on the reference value using a stochastic estimator. In addition, final convergence is perturbed in~\cite{nozari2017differentially, huang2012differentially,nozari2015differentially, manitara2013privacy} and convergence rate is altered in~\cite{mo2016privacy}. Moreover, these classes of privacy preserving measures cause noisy transient response for the algorithm, which results in excessive energy expenditure if agents use local controls to track the consensus trajectory as a reference input. Lastly, the guarantees established in all the methods discussed above are only for connected undirected graphs. Interested reader can also find privacy preservation methods for the continuous-time implementation of algorithm~\eqref{eq::Alg1} in~\cite{altafini2020system} and~\cite{NR-SSK:18}.

\begin{figure}[t]
\centering
{\includegraphics[width=0.75\linewidth]{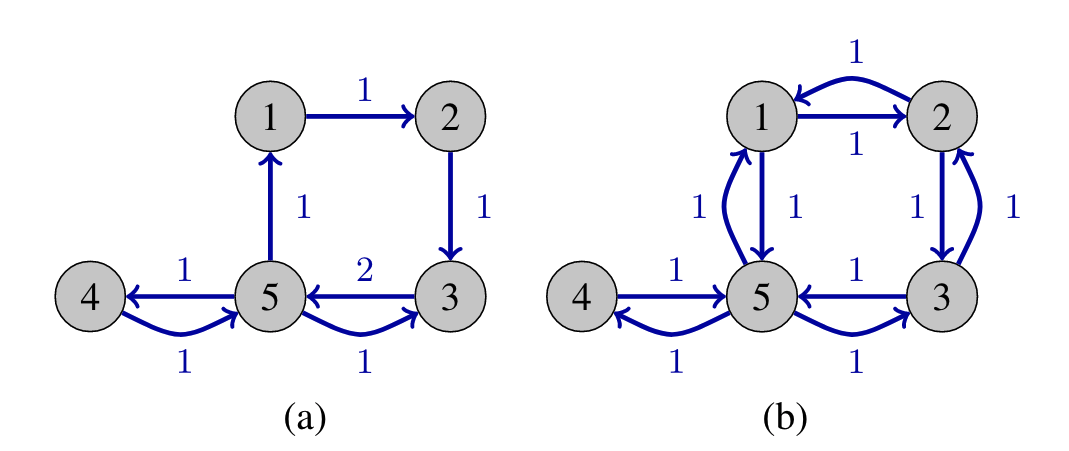}}
\caption{{\small The network in plot (a) is an example of a strongly~connected and weight-balanced digraph, while the network in plot (b) is an example of a connected undirected graph. An edge $i\xrightarrow{\mathit{\vectsf{a}_{ij}}}j$ from an agent $i$ to agent $j$ means that agent $i$ can obtain~information from agent $j$; $\vectsf{a}_{ij}\!>\!0$ is the corresponding adjacency matrix~element.
}
}
\label{fig::digraph}
\end{figure}

In this paper, we focus on a solution for PrivCon problem that does not perturb the final convergence value.  We address the PrivCon problem by proposing to use an alternative algorithm (see algorithm~\eqref{eq::Alg2}) that yields similar transient and convergence performance to that of algorithm~\eqref{eq::Alg1}. We analyze the privacy preservation of this algorithm carefully and show that this algorithm intrinsically yields exactly the same privacy preservation guarantees as in \cite{manitara2013privacy,mo2016privacy,he2018privacy} in terms of which agents' privacy can be preserved. More precisely, we show that similar to the results in~\cite{manitara2013privacy,mo2016privacy,he2018privacy} the privacy of any agent that has at least one out-neighbor that is not the out-neighbor of the malicious agent is preserved. In comparison to the privacy preservation by use of the additive obfuscation noise methods~\cite{manitara2013privacy, mo2016privacy,he2018privacy}, however, algorithm~\eqref{eq::Alg2} offers a deterministic and stronger sense of privacy preservation, i.e., it meets the privacy preservation definition given~below. 
\begin{definition}
    [Privacy preservation]\label{asm::privacy}{\rm
    Privacy of an agent is preserved from a malicious agent if the malicious agent~cannot obtain any estimate on the reference value of the agent.~\boxend
    }
\end{definition}
To show that the privacy of an agent $i$ is preserved in the sense of Definition~\ref{asm::privacy}, we show that there exists other values of $\mathsf{r}^i$, \emph{arbitrarily} different from the true one, that give the same trajectory for the information the malicious agent receives from its out-neighbors. Therefore, the malicious agent cannot distinguish between the true reference value and the arbitrary alternatives. This, in essence, is a stronger notion of privacy preservation than that of the $\epsilon$-differential privacy~\cite{dwork2014algorithmic} or that of~\cite{mo2016privacy} where even though the malicious agent cannot obtain the exact reference value of the agents but it obtains an estimate on the reference value.

%are infinite execution of algorithm~\eqref{eq::Alg2}, each with an alternative reference value in $\real$ for agent $i$, in which the information the malicious agent receives from its out-neighbors is the same for~$t\!\in\!\real_{\geq0}$.}

In our demonstration study in Section~\ref{sec::num}, we compare in particular our suggested PrivCon problem solution to the additive obfuscation noise method of~\cite{mo2016privacy}.~\cite{mo2016privacy} is a prominent solution in the class of additive obfuscation noise methods that guarantees exact convergence while not allowing a malicious agent to obtain the exact reference value of the agents under the connectivity condition stated earlier. However, in the framework of~\cite{mo2016privacy} there are disclosure of information about the reference value in two levels. First, as shown in~\cite{mo2016privacy}, the malicious agent can employ a stochastic observer to obtain an estimate on the reference value of other agents. The normalized variance of this estimator can be very small compared to the size of the reference value of the agents. Moreover, in~\cite{mo2016privacy} agents need to coordinate to choose a common parameter $\phi$ in their noise generator and also use a common variance. The knowledge about the noise parameters also is another level of disclosure of an estimate on the reference value of the agents, as the transmit message of each agent at the initial step of the algorithm is the reference value plus  a value generated by their local noise whose variance is the same for all the agents. Moreover, in choosing the variance value for the noise, agents need to consider the size of their local reference values to yield meaningful obfuscation. We note that using a noise with large variance results in a more perturbed transient response and a slower convergence. The advantage of our proposed solution is to meet the privacy preservation as given in Definition~\ref{asm::knowledgesetadversary} while rendering a similar transient and convergence behavior to that of algorithm~\eqref{eq::Alg1}, having no need for coordination to choose the parameters of the algorithm, and no extra computations. Our notation is standard, though certain pieces of notation are defined as need arises.
\section{Problem Setting}\label{sec::prob_setting}
Our graph theoretic definitions follow~\cite{bullo2009distributed}. A weighted digraph is a triplet $\mathcal{G}(\VV,\EE,\vectsf{A})$
 where $\VV$ is the node set, $\EE\subset\VV\times \VV$ is the edge set and $\vectsf{A})=[\vectsf{a}_{ij}]$ is the
weighted adjacency matrix defined such that $\vectsf{a}_{ij}>0$ if $(i,j)\in\mathcal{E}$, otherwise $\vectsf{a}_{ij}=0$. Given an edge $(i,j)$, $i$ is called 
 an in-neighbor of 
$j$, and 
$j$ is called an out-neighbor of $i$. 
 The (weighted) in-degree and out-degree of 
 each node $i\in\VV$ are respectively are $\mathsf{d}_{\text{out}}^i=\sum_{j=1}^N \vectsf{a}_{ij}$ and $\mathsf{d}_{\text{in}}^i=\sum_{j=1}^N \vectsf{a}_{ji}$. A weighted digraph is undirected if $\vectsf{a}_{ij}=\vectsf{a}_{ji}$
 for all $i,j\in\VV$. A digraph is weight-balanced if at every node $i\in\VV$, $\mathsf{d}_{\text{out}}^i=\mathsf{d}_{\text{in}}^i$. A digraph is strongly connected if there is a directed path from every node to every other node. The Laplacian matrix of a digraph $\GG$ is  $\lL\!=\!\Diag{\mathsf{d}_{\text{out}}^1,\cdots, \mathsf{d}_{\text{out}}^N}\!-\!\vectsf{A}$.

Let the interaction topology of the agents be a strongly connected and weight-balanced digraph $\GG\!=\!(\VV,\mathcal{E},\vectsf{A})$. We consider the average consensus algorithm 
\begin{subequations}\label{eq::Alg2}
\begin{align}
    v^i(k+1)&=v^i(k)+\Delta\sum\nolimits_{j=1}^{N}\vectsf{a}_{ij}(x^{i}(k)-x^{j}(k)),\label{eq::Alg2a}\\
    x^i(k+1)&=x^i(k)+\Delta\big(-(x^i(k)-\mathsf{r}^i)\nonumber\\
    &-\sum\nolimits_{j=1}^{N}\vectsf{a}_{ij}(x^{i}(k)-x^{j}(k))-v^i(k)\big),\label{eq::Alg2b}\\
    &\!\!\!\!\!\!\!\!\!\!\!\!\!\!\!\!\!\!\!\!\!\!x^i(0), v^i(0)\in\real, ~ i\in\VV, ~\text{s.t.}~\sum\nolimits_{i=1}v^i(0)=0,
\end{align}
\end{subequations}
proposed in~\cite{SSK-JC-SM:15-ijrnc}, as a solution for privacy preserving average consensus algorithm over $\GG$. As discussed in~\cite{kia2019tutorial}, %and also demonstrated for an example case in Section~\ref{sec::num},
~\eqref{eq::Alg2} yields a similar transient response to that of algorithm~\eqref{eq::Alg1}.
In our analysis, the privacy preservation in the sense of Definition~\ref{asm::privacy} is against a malicious agent defined as in Definition~\ref{asm::knowledgesetadversary}.  \review{Our study is different than the privacy preservation study in~\cite{SSK-JC-SM:15-ijrnc} %also conducts a privacy preservation study that is ca~\cite{SSK-JC-SM:15-ijrnc} 
where the continuous-time representation of algorithm~\eqref{eq::Alg2} is considered and the reference values are assumed to be dynamic time-varying signals. The guarantees provided crucially depend on the time varyingness of the reference values.}
%The continuous-time representation of algorithm~\eqref{eq::Alg2} uses the derivative of the reference values of the agents as an input to its agreement state dynamics, see~\cite[Eq.(4a)]{SSK-JC-SM:15-ijrnc}. The privacy preservation argument then is to obtain the exact time history of the reference input signal, the malicious agent should solve a differential equation that \review{consists of} the reference input signal of the agent and its derivative. Since the initial condition of this differential equation is not known to the malicious agent the time history of the reference input signal $\mathsf{r}^i$ can not be \review{reconstructed}. 

Let $\vect{x}=(x^1,\cdots,x^N)$, $\vect{v}=(v^1,\cdots,v^N)$,    $\vect{\Pi}=\textbf{I}_{N}-\frac{1}{N}\vectsf{1}_{N}\vectsf{1}_{N}^\top$, and  $\bar{\vectsf{r}}=\mathsf{r}^{\text{avg}}\vectsf{1}_N$. Moreover, let $\vect{\mathfrak{r}}=\frac{1}{\sqrt{N}}\vect{1}_N$, and $ \vectsf{\mathfrak{R}}$ such that $$\begin{bmatrix}\vect{\mathfrak{r}}&~\vectsf{\mathfrak{R}}\end{bmatrix}\begin{bmatrix}\vect{\mathfrak{r}}^\top\\\vect{\mathfrak{R}}^\top\end{bmatrix}=\begin{bmatrix}\vect{\mathfrak{r}}^\top\\\vect{\mathfrak{R}}^\top\end{bmatrix}\begin{bmatrix}\vect{\mathfrak{r}}&~\vectsf{\mathfrak{R}}\end{bmatrix}=\vect{I}_N.$$ Then, using  the change of variable $$
\begin{bmatrix}
    \mathsf{q}_1\\
    \vectsf{q}_{2:N}\\
    \mathsf{p}_1\\
    \vectsf{p}_{2:N}
\end{bmatrix}
=
\begin{bmatrix}
    \begin{bmatrix}
    \vectsf{\mathfrak{r}}^\top \\
    \vectsf{\mathfrak{R}}^\top
    \end{bmatrix} &
    \begin{bmatrix}
    \vectsf{0} \\
    \vectsf{\mathfrak{R}}^\top
    \end{bmatrix}\\
    \vectsf{0} &
    \begin{bmatrix}
    \vect{\mathfrak{r}}^\top \\
    \vectsf{\mathfrak{R}}^\top
    \end{bmatrix}
\end{bmatrix}
\begin{bmatrix}
    \vectsf{v}-\vect{\Pi} \vectsf{r}\\
    \vectsf{x}-\bar{\vectsf{r}}
\end{bmatrix},$$
algorithm~\eqref{eq::Alg1} reads~as $$\begin{bmatrix}
    \mathsf{p}_1(k+1)\\
    \vectsf{p}_{2:N}(k+1)
\end{bmatrix}=\begin{bmatrix}
    1&\vectsf{0}\\
    0&~\vect{I}-\Delta\lL^{+}
\end{bmatrix}\begin{bmatrix}
    \mathsf{p}_1(k)\\
    \vectsf{p}_{2:N}(k)
\end{bmatrix},$$ and algorithm~\eqref{eq::Alg2}~as
\begin{align*}%\label{eq::Alg_2_eqiu}
   \begin{bmatrix}
    \mathsf{q}_1(k+1)\\
     \vectsf{q}_{2:N}(k+1)\\
     \mathsf{p}_1(k+1)\\
     \vectsf{p}_{2:N}(k+1)
   \end{bmatrix}\!\!=\!\!\begin{bmatrix}1&\vect{0}&0&\vect{0}\\
   0&(1-\Delta)\vect{I}&0&\vect{0}\\
   -\Delta&\vect{0}&(1-\Delta)&\vect{0}\\
   0&-\Delta\vect{I}&0&\vect{I}-\Delta\lL^{+}\end{bmatrix}\!\!\!\begin{bmatrix}
    \mathsf{q}_1(k)\\
     \vectsf{q}_{2:N}(k)\\
     \mathsf{p}_1(k)\\
     \vectsf{p}_{2:N}(k)
   \end{bmatrix},
\end{align*}
where $\lL^{+}\!=\!\vectsf{\mathfrak{R}}^\top\vect{L}\vectsf{\mathfrak{R}}$. These equivalent representations show the connection between the internal dynamics of algorithms~\eqref{eq::Alg1} and~\eqref{eq::Alg2}. For a strongly connected and weight-balanced digraph, $\lL$ has a simple zero eigenvalue and the rest of the eigenvalues $\{\lambda_i\}_{i=2}^N$ have non-zero positive real parts. 
Then, the eigenvalues of $\lL^{+}$ are $\{\lambda_i\}_{i=2}^N$. 
By invoking \cite[Lemma~S1]{kia2019tutorial}, we note that the exponential stability of~\eqref{eq::Alg1} and~\eqref{eq::Alg2} is guaranteed, respectively, for any $\Delta\!\in\!(0,\bar{\Delta})$ and $\Delta\!\in\!(0,\min\{2,\bar{\Delta}\})$, where $\bar{\Delta}\!=\!\min\big\{2\frac{\re{\lambda_i}}{|\lambda_i|^2}\big\}_{i=2}^N$.
Given a similar performance, an immediate appeal of algorithm~\eqref{eq::Alg2} over algorithm~\eqref{eq::Alg1} is that the reference value of agents is not trivially \review{transmit} to the in-neighbors. The question that we address in the subsequent section is whether the malicious agent can use its knowledge set to compute the reference values of other~agents when agents implement algorithm~\eqref{eq::Alg2}.
\section{Privacy Preservation Analysis}
In this section we show that algorithm~\eqref{eq::Alg2} intrinsically yields the same privacy preservation guarantees as in~\cite{manitara2013privacy,mo2016privacy,he2018privacy} in terms of which agents' privacy can be preserved. However, the guarantees of algorithm~\eqref{eq::Alg2} are valid also for strongly connected and weight-balanced digraphs. Moreover, unlike the privacy preservation of~\cite{manitara2013privacy, mo2016privacy,he2018privacy} which is obtained by using additive noises and comes with disclosure of a stochastic estimate on the private value of the agents, algorithm~\eqref{eq::Alg2} offers a deterministic and stronger sense of privacy preservation, i.e., it meets the privacy preservation objective defined in Definition~\ref{asm::privacy}. 
In what follows without loss of generality we assume that agent $1$ is the malicious agent.  %\margin{in Mo's paper the advarsary is the last agent. To stay consistant for the example scenario
The initialization condition $\sum_{i=1}^Nv^i(0)=0$ is trivially satisfied when every agent $i\in\VV$  uses $v^i(0)\!=\!0$. Other choices need coordination among agents with no strong guarantee that the choices are private. Thus
%, in what follows
, we assume that $v^i(0)=0$, $i\in\VV$ and is known to the malicious agent. Moreover, $\mathcal{N}_{\text{in}}^i$ and $\mathcal{N}_{\text{out}}^i$ are, respectively, the set of in-neighbors and out-neighbors of agent $i$. We also define  $\mathcal{N}_{\text{in}}^{i+}=\mathcal{N}_{\text{in}}^i\cup\{i\}$.

\begin{lemma}[A sufficient condition for privacy preservation when algorithm~\eqref{eq::Alg2} is implemented]\label{lem::PrivacyAlg2}
    Let the interaction topology $\GG$ of the agents implementing algorithm~\eqref{eq::Alg2} with $\Delta\in(0,\min\{2,\bar{\Delta}\})$, initialized at $x^i(0)\in\real$ and $v^i(0)=0$, $i\in\VV$, be strongly connected and weight-balanced digraph. Let the knowledge set of malicious agent $1$ be as in Definition~\ref{asm::knowledgesetadversary}.  The privacy of any agent $i\in\VV\backslash\{1\}$ is preserved from agent $1$ if agent $i$ has at least one out-neighbor that is not an out-neighbor of agent~$1$.
\end{lemma}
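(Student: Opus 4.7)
The plan is to prove privacy by exhibiting, for any $\delta\in\real$, an alternative initialization $(\tilde{\vectsf{r}}, \tilde{\vectsf{x}}(0), \tilde{\vectsf{v}}(0)=\vectsf{0})$ of~\eqref{eq::Alg2} with $\tilde{\mathsf{r}}^1 = \mathsf{r}^1$, $\tilde{x}^1(0) = x^1(0)$, and $\tilde{\mathsf{r}}^i = \mathsf{r}^i + \delta$ that produces identical outputs at agent $1$---namely $\tilde{x}^1(k), \tilde{v}^1(k)$ and the incoming signals $\{\tilde{x}^j(k)\}_{j \in \mathcal{N}_{\text{out}}^1}$---so that agent $1$ cannot distinguish $\mathsf{r}^i$ from $\mathsf{r}^i + \delta$. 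Since~\eqref{eq::Alg2} is linear, this reduces to constructing a vector in the unobservable subspace of agent $1$'s observations that moves the $i$-th reference coordinate by the prescribed $\delta$, subject to the known constraints $\tilde{\mathsf{r}}^1-\mathsf{r}^1 = 0$, $\tilde{x}^1(0)-x^1(0) = 0$, $\tilde{v}^j(0)=0$, and $\sum_j(\tilde{\mathsf{r}}^j-\mathsf{r}^j)=0$ (the last of these keeps the observable asymptote $\mathsf{r}^{\text{avg}}$ intact).

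The main tool will be the per-agent decoupled identity $u^j(k) := x^j(k) + v^j(k) = \mathsf{r}^j + (1-\Delta)^k(x^j(0) - \mathsf{r}^j)$, which I derive by summing~\eqref{eq::Alg2a} and~\eqref{eq::Alg2b} to get $u^j(k+1) = (1-\Delta)u^j(k) + \Delta\mathsf{r}^j$ and using $v^j(0)=0$. The graph coupling in~\eqref{eq::Alg2} enters only through the split $x^j = u^j - v^j$; in particular, for any observed $j\in\{1\}\cup\mathcal{N}_{\text{out}}^1$ with $\tilde{\mathsf{r}}^j=\mathsf{r}^j$ and $\tilde{x}^j(0)=x^j(0)$, the identity forces $\tilde{u}^j\equiv u^j$, so $\tilde{x}^j\equiv x^j$ iff $\tilde{v}^j\equiv v^j$. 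Since $v^j(k)=\Delta\sum_{t=0}^{k-1}\sum_{j''}\vectsf{a}_{jj''}(x^j(t)-x^{j''}(t))$ is reconstructible by agent $1$ precisely when every $j''\in\mathcal{N}_{\text{out}}^j$ belongs to $\{1\}\cup\mathcal{N}_{\text{out}}^1$, the hypothesis (in the substantive case $\ell\neq 1$) guarantees that some $\ell\in\mathcal{N}_{\text{out}}^i\setminus\mathcal{N}_{\text{out}}^1$ is invisible to agent $1$ and therefore blocks direct reconstruction of $v^i$ and hence of $\mathsf{r}^i$ via $u^i$.

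I will next build the alternative scenario explicitly by setting $\tilde{\mathsf{r}}^i=\mathsf{r}^i+\delta$, $\tilde{\mathsf{r}}^\ell=\mathsf{r}^\ell-\delta$, $\tilde{x}^\ell(0)=x^\ell(0)-\delta/\vectsf{a}_{i\ell}$, and leaving every other reference and initial state at its true value. The $u$-identity gives $\tilde{u}^j\equiv u^j$ for $j\notin\{i,\ell\}$, $\tilde{u}^i-u^i=\delta(1-(1-\Delta)^k)$, and $\tilde{u}^\ell-u^\ell=-\delta$. Under the ansatz $\tilde{x}^j\equiv x^j$ for $j\neq\ell$ and $\tilde{x}^\ell(k)=x^\ell(k)-(\delta/\vectsf{a}_{i\ell})(1-\Delta)^k$, the implied differences $\tilde{v}^j-v^j$ become $\delta(1-(1-\Delta)^k)$ at $j=i$, $-\delta(1-(1-\Delta)^k)$ at $j=\ell$, and zero elsewhere. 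I will then plug these expressions back into~\eqref{eq::Alg2a}--\eqref{eq::Alg2b} and verify agent-by-agent that the prescribed $(\tilde{\vectsf{x}}, \tilde{\vectsf{v}})$ really is the trajectory generated by the perturbed initialization, so that every observed $\tilde{x}^j(k)$ coincides with $x^j(k)$.

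The hard part will be the general-graph case in which $\ell$ has in-neighbors in $\{1\}\cup\mathcal{N}_{\text{out}}^1$ other than $i$: then the shift in $\tilde{x}^\ell$ leaks into the $v$-dynamics of some observed $j'\neq i$ and breaks $\tilde{v}^{j'}\equiv v^{j'}$. To handle this I plan to enlarge the ansatz by allowing nonzero perturbations $\delta^{j''},\epsilon^{j''}$ on a suitable subset of hidden agents $j''\in\VV\setminus(\{1\}\cup\mathcal{N}_{\text{out}}^1)$. The invariance requirement then reduces to the algebraic family $\sum_{j''\in\mathcal{N}_{\text{out}}^j\setminus(\{1\}\cup\mathcal{N}_{\text{out}}^1)}\vectsf{a}_{jj''}(\tilde{x}^{j''}(k)-x^{j''}(k)) = -\delta^j(1-\Delta)^k$ for each observed $j$ and each $k$, and I will argue solvability for arbitrary $\delta^i=\delta$ by combining the decoupled $u$-dynamics on the hidden subsystem with the fact that the row indexed by $i$ contains the nonzero entry $\vectsf{a}_{i\ell}$ at column $\ell\in\mathcal{N}_{\text{out}}^i\setminus\mathcal{N}_{\text{out}}^1$, which is exactly the degree of freedom the hypothesis guarantees.
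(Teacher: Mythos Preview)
Your overall strategy---exhibit an indistinguishable alternative execution by perturbing references and initial conditions in the kernel of agent~$1$'s observation map---is exactly the paper's approach, and your decoupled identity $u^j=x^j+v^j$ is a clean device. However, the concrete construction you propose does not work in general, and your plan to repair it points in the wrong direction.

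Your two-reference ansatz ($e_{\mathsf{r}}^i=\delta$, $e_{\mathsf{r}}^\ell=-\delta$, $e_x^\ell(0)=-\delta/\vectsf{a}_{i\ell}$, everything else zero) keeps $e_x^j\equiv 0$ only at agents $j$ with $\vectsf{a}_{j\ell}=0$; for any other in-neighbor $j\in\mathcal{N}_{\text{in}}^\ell\setminus\{i\}$, the $x$-equation at $j$ forces $e_{\mathsf{r}}^j=(\vectsf{a}_{j\ell}/\vectsf{a}_{i\ell})\delta\neq 0$, which your ansatz sets to zero. (Incidentally, your formulas $\tilde u^\ell-u^\ell=-\delta$ and $\tilde v^\ell-v^\ell=-\delta(1-(1-\Delta)^k)$ are only correct when $\vectsf{a}_{i\ell}=1$ and $\mathsf{d}_{\text{out}}^\ell=\vectsf{a}_{i\ell}$, respectively.) You correctly flag the breakdown, but the fix you sketch---perturb only \emph{hidden} agents further and argue solvability of a linear system in the hidden $e_x^{j''}(k)$---can fail outright: if some observed $j'\in\mathcal{N}_{\text{in}}^\ell\cap\mathcal{N}_{\text{out}}^1$ has $\ell$ as its \emph{only} hidden out-neighbor, your invariance equation at $j'$ forces $e_x^\ell\equiv 0$, collapsing the whole construction. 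The hypothesis gives you a hidden out-neighbor of $i$, not of every observed $j'$.

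The paper's construction is much simpler and avoids this trap: keep $e_x^j\equiv 0$ for \emph{all} $j\neq\ell$, but perturb the reference of \emph{every} in-neighbor of $\ell$ (observed or not), setting $e_{\mathsf{r}}^j=-\vectsf{a}_{j\ell}\,e_x^\ell(0)$ for $j\in\mathcal{N}_{\text{in}}^\ell$ and $e_{\mathsf{r}}^\ell=\mathsf{d}_{\text{out}}^\ell\,e_x^\ell(0)$. This is legitimate because agent~$1$ does not know $\mathsf{r}^j$ for $j\neq 1$, and the hypothesis $\ell\notin\mathcal{N}_{\text{out}}^1$ guarantees $1\notin\mathcal{N}_{\text{in}}^\ell$, so $e_{\mathsf{r}}^1=0$ is untouched. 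One then checks directly that $e_x^\ell(k)=(1-\Delta)^k e_x^\ell(0)$ together with $e_x^j(k)\equiv 0$ for $j\neq\ell$ is a bona fide error trajectory, that $\sum_j e_{\mathsf{r}}^j=0$ by weight balance, and that $e_{\mathsf{r}}^i=-\vectsf{a}_{i\ell}e_x^\ell(0)$ ranges over all of $\real$. The idea you are missing is that you may---and must---perturb the unknown references of \emph{observed} in-neighbors of $\ell$; restricting the extra degrees of freedom to hidden agents is precisely what makes your general case intractable.
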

\begin{proof}
To prove the statement, we consider the worst case where agent $i$ whose privacy is being evaluated is an out-neighbor of agent $1$, i.e., agent $1$ receives the \review{transmitted} message of agent $i$. Without loss of generality let this agent be labeled agent $2$. Let agent $3$ be the out-neighbor of agent $2$ that is not an out-neighbor of agent $1$, i.e.,  $2\in\mathcal{N}_{\text{in}}^3$ but $1\not\in\mathcal{N}_{\text{in}}^3$. To show privacy preservation for agent $2$, we show that there are infinitely many execution\review{s} of algorithm~\eqref{eq::Alg2} with different values of $\mathsf{r}^2\in\real$ and $\mathsf{r}^3\in\real$  that yield exactly the same received signal by agent $1$.  To this end, consider two executions of algorithm~\eqref{eq::Alg2}. % where the states are indexed by respectively $1$ and $2$ in which we have $\mathsf{r}^i_1=\mathsf{r}^i_2$, $i\in\{1,\cdots,N\}\backslash\{2,3\}$ and $\mathsf{r}^2_1+\mathsf{r}^3_1=\mathsf{r}^2_2+\mathsf{r}^3_2$. 
Let $e_x^i=x_1^i-x_2^i$ and $e_v^i=v_1^i-v_2^i$, $i\in\VV$. Moreover, let the reference value of the agents in the first execution be $\{\mathsf{r}_1^i\}_{i=1}^N$ and in the second execution be $\{\mathsf{r}_2^i\}_{i=1}^N$ such that $e^i_{\mathsf{r}}=\mathsf{r}_1^i-\mathsf{r}_2^i\neq0$ if $i\in\mathcal{N}_{\text{in}}^{3+}$, otherwise $e^i_{\mathsf{r}}=0$. Moreover, $\sum\nolimits_{i=1}^{N}e^i_{\mathsf{r}}=0$, i.e., $\mathsf{r}_1^{\text{avg}}=\mathsf{r}_2^{\text{avg}}$. Lastly, $e_x^i(0)=0$, for $i\in\VV\backslash \{3\}$. 
Next, we show that for any $e_x^3(0)\in\real$ there exists $e_{\mathsf{r}}^i\neq0$, $i\in\mathcal{N}_{\text{in}}^{3+}$, such that $e_x^j(k)\equiv0$, $j\in\VV\backslash\{3\}$ for $k\in\integernonnegative$. Therefore, the signals received by agent $1$ for these two distinct executions is exactly the same and agent $1$ cannot distinguish between them. 

The error dynamics at each agent $i\in\VV$ reads as
\begin{subequations}\label{eq::Alg2_eror}
\begin{align}
    e_v^i(k+1)&=e_v^i(k)+\Delta\sum\nolimits_{j=1}^{N}\vectsf{a}_{ij}(e_x^{i}(k)-e_x^{j}(k)),\label{eq::Alg2a_e}\\
    e_x^i(k+1)&=e_x^i(k)+\Delta\big(-(e_x^i(k)-e_\mathsf{r}^i)\nonumber\\
    &-\sum\nolimits_{j=1}^{N}\vectsf{a}_{ij}(e_x^{i}(k)-e_x^{j}(k))-e_v^i(k)\big),\label{eq::Alg2b_e}\\
    &\!\!\!\!\!\!\!\!\!\!\!\!\!\!\!\!\!\!\!\!\!\!  e_v^i(0)=0,~~ \begin{cases}e_x^i(0)=0,&\text{if}~i\neq3,\\
    e_x^i(0)\neq 0&\text{if}~i=3
    \end{cases}.
\end{align}
\end{subequations}
If $e_x^i(k)\equiv0$ for any $i\in\VV\backslash\{3\}$, then 
 $e_x^i(k)\equiv0$, and $e_v^i(k)\equiv0$ for $k\in\integernonnegative$ satisfy the error dynamics~\eqref{eq::Alg2_eror} for any $i\in\VV\backslash\mathcal{N}_{\text{in}}^{3+}$. On the other hand, for $i\in\mathcal{N}_{\text{in}}^{3}$ 
 \begin{subequations}\label{eq::PrivacyErrorRelationsAlg2}
    \begin{align}
        e^{i}_{v}(k+1)&=e^{i}_{v}(k)-\Delta \vectsf{a}_{i3}\, e^{3}_{x}(k),\label{eq::PrivacyErrorRelationsAlg2c}\\
        0&=e^{i}_{\mathsf{r}}+\vectsf{a}_{i3}\,e^{3}_{x}(k)-e^{i}_{v}(k),\label{eq::PrivacyErrorRelationsAlg2d}
    \end{align}
\end{subequations}
 and since $\sum_{i=1}^N\vectsf{a}_{3i}=\mathsf{d}_\text{out}^3$,~\eqref{eq::Alg2_eror} for agent $3$ reads as
\begin{subequations}\label{eq::PrivacyErrorRelationsAlg2-ag3}
    \begin{align}
        e^{3}_{v}(k+1)&=e^{3}_{v}(k)+\Delta\mathsf{d}_{\text{out}}^{3}e^{3}_{x}(k),\label{eq::PrivacyErrorRelationsAlg2a}\\
        e^{3}_{x}(k+1)&=e^{3}_{x}(k)\!+\!\Delta(e^{3}_{\mathsf{r}}\!-\!(1\!+\!\mathsf{d}_{\text{out}}^{3})e^{3}_{x}(k)-e^{3}_{v}(k)).\label{eq::PrivacyErrorRelationsAlg2b}
    \end{align}
\end{subequations}
Note that with adding \eqref{eq::PrivacyErrorRelationsAlg2c} and \eqref{eq::PrivacyErrorRelationsAlg2a} we have $\sum_{i\in\mathcal{N}^{3+}_{\text{in}}}e_v^i(k)=0$. Considering $\sum_{i=1}^N\vectsf{a}_{i3}=\mathsf{d}_\text{in}^3=\mathsf{d}_\text{out}^3$ and $\sum_{i\in\mathcal{N}^{3+}_{\text{in}}}e_{\mathsf{r}}^i=0$ also, it follows from adding up~\eqref{eq::PrivacyErrorRelationsAlg2b} and~\eqref{eq::PrivacyErrorRelationsAlg2d} for $i\in\mathcal{N}^{3+}_{\text{in}}$ that 
\begin{align}
e^{3}_{x}(k+1)&=(1-\Delta) e^{3}_{x}(k).\label{eq::e_3_evalve}
\end{align}
Then, since $e_v^3(0)\!=\!0$, it follows from~\eqref{eq::PrivacyErrorRelationsAlg2b},~\eqref{eq::e_3_evalve}, and~\eqref{eq::PrivacyErrorRelationsAlg2d} that 
\begin{align}\label{eq::privacyconstraints}
    e^{3}_{\mathsf{r}}=\mathsf{d}_{\text{out}}^{3}e^{3}_{x}(0),\quad  e^{i}_{\mathsf{r}}=-\vectsf{a}_{i3}e^{3}_{x}(0),\quad i\in\mathcal{N}_\text{in}^{3}.
\end{align}
Since $\sum_{i=1}^N\vectsf{a}_{i3}=\mathsf{d}_\text{in}^3=\mathsf{d}_\text{out}^3$, it follows from~\eqref{eq::privacyconstraints} that indeed $\sum\nolimits_{i\in\mathcal{N}_{\text{in}}^{3+}}e^i_{\mathsf{r}}=0$. Therefore, we can conclude that $\mathsf{r}_1^{\text{avg}}=\mathsf{r}_2^{\text{avg}}$. It is then proven that with an unbounded error in the initial state of $e_x^3$, $e^2_{\mathsf{r}}$ varies unboundedly as well yet the malicious agent receives the same signals from its out-neighbors, i.e., $e_x^i(k)\equiv0$, $k\in\mathbb{Z}_{\geq0}$, for $i\in\mathcal{N}_{\text{out}}^1$.
Note here that since $|1-\Delta|<1$, it follows from \eqref{eq::e_3_evalve} that $e_x^3(k)$ also converges to zero as $k\to\infty$ showing that all the agents converge to the same final average point in both execution $1$ and $2$.
\end{proof}
Our next result shows that 
when the malicious agent has access to the messages \review{transmitted} to and from an agent $i$, similar to the methods proposed in~\cite{manitara2013privacy,mo2016privacy,he2018privacy}, the privacy of agent $i$ is not preserved. To establish this result we show that the malicious agent can implement an observer to obtain $\mathsf{r}^i$.
\begin{lemma}[A sufficient condition for loss of privacy preservation when algorithm~\eqref{eq::Alg2} is implemented]\label{lem::PrivacyAlg2_observ}
    Let the interaction topology $\GG$ of the agents implementing algorithm~\eqref{eq::Alg2}, initialized at $x^i(0)\in\real$ and $v^i(0)=0$, $i\in\VV$, be \review{a} strongly connected and weight-balanced digraph. Let the knowledge set of malicious agent $1$ be as in Definition~\ref{asm::knowledgesetadversary}. If agent $1$ is the in-neighbor of agent $o\in\VV\backslash\{1\}$ and all its out-neighbors, then at any $k\in\mathbb{Z}_{\geq1}$ agent $1$ can obtain $\mathsf{r}^o$ using $x^o(k)$, $x^o(k-1)$ and $\{x^j(k-1)\}_{j\in\mathcal{N}_{\text{out}}^o}$.
\end{lemma}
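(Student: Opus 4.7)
The plan is to show that the malicious agent can reconstruct the hidden internal state $v^o(k)$ of agent $o$ by running an observer that mirrors~\eqref{eq::Alg2a}, and then invert~\eqref{eq::Alg2b} to solve algebraically for $\mathsf{r}^o$. The key observation is that the $v$\nobreakdash-update~\eqref{eq::Alg2a} of agent $o$ depends only on $x^o(k)$, $\{x^j(k)\}_{j\in\mathcal{N}_{\text{out}}^o}$, and the adjacency weights $\{\vectsf{a}_{oj}\}$, none of which involve the private value $\mathsf{r}^o$. By the hypothesis that agent $1$ is an in-neighbor of agent $o$ and of every out-neighbor of $o$, agent $1$ receives the transmitted state $x^o(k)$ as well as every $x^j(k)$ with $j\in\mathcal{N}_{\text{out}}^o$ at every iteration, and the graph topology (hence the weights) belongs to its knowledge set.

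First I would exploit the known initialization $v^o(0)=0$ together with the recursion \eqref{eq::Alg2a} to write, for every $k\geq 1$,
\begin{align*}
v^o(k-1)=\Delta\sum_{m=0}^{k-2}\sum_{j=1}^{N}\vectsf{a}_{oj}\bigl(x^o(m)-x^j(m)\bigr),
\end{align*}
which shows that $v^o(k-1)$ is computable by agent $1$ from observations that it has by time $k-1$. Then I would rearrange~\eqref{eq::Alg2b}, written for agent $o$ at step $k-1\to k$, to obtain
\begin{align*}
\mathsf{r}^o=x^o(k-1)+\frac{x^o(k)-x^o(k-1)}{\Delta}+\sum_{j=1}^{N}\vectsf{a}_{oj}\bigl(x^o(k-1)-x^j(k-1)\bigr)+v^o(k-1),
\end{align*}
which expresses $\mathsf{r}^o$ as a closed-form function of the quantities named in the statement (together with $v^o(k-1)$, which is reconstructed from strictly earlier observations). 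All the entries on the right-hand side are available to agent $1$, so it can evaluate $\mathsf{r}^o$ at any $k\geq 1$.

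There is no real analytic obstacle here; the content of the proof is essentially an observability argument. The only delicate point to state cleanly is the bookkeeping: distinguishing the ``on-line'' data $x^o(k),x^o(k-1),\{x^j(k-1)\}_{j\in\mathcal{N}_{\text{out}}^o}$ that appears in the inversion formula from the recursively-maintained quantity $v^o(k-1)$, which agent $1$ has already computed by time $k$ from the same kind of observations at earlier iterations and from the initial condition $v^o(0)=0$ guaranteed by the problem setting. This makes transparent the contrast with Lemma~\ref{lem::PrivacyAlg2}: when some out-neighbor of $o$ is hidden from agent $1$, the sum $\sum_{j}\vectsf{a}_{oj}(x^o(k-1)-x^j(k-1))$ cannot be evaluated and the algebraic inversion breaks down.
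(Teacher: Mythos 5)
Your proposal is correct and follows exactly the route the paper intends: the paper's proof is the one-line remark that $\mathsf{r}^o$ follows from ``a simple algebraic manipulation over~\eqref{eq::Alg2},'' and your reconstruction of $v^o(k-1)$ from the $v$-update with $v^o(0)=0$ followed by inversion of~\eqref{eq::Alg2b} is precisely that manipulation, spelled out (and consistent with the observer used later in Theorem~\ref{thm::adversarialobserverdesignperturbed}). Your bookkeeping remark is also apt: for $k\geq 2$ the inversion needs the recursively accumulated $v^o(k-1)$ and hence earlier observations, not only the three quantities named in the lemma statement, a point the paper glosses over.
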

\begin{proof}
Proof of Lemma~\ref{lem::PrivacyAlg2_observ} follows from a simple algebraic manipulation over~\eqref{eq::Alg2} to obtain $\mathsf{r}^o$ in terms of $x^o(k)$, $x^o(k-1)$ and $\{x^j(k-1)\}_{j\in\mathcal{N}_{\text{out}}^o}$.
\end{proof}

 Lemma~\ref{lem::PrivacyAlg2} and~\ref{lem::PrivacyAlg2_observ} gives us the following main statement on the privacy preservation guarantees of \review{Algorithm}~\eqref{eq::Alg2}.
\begin{theorem}[Privacy Preservation guarantee of algorithm~\eqref{eq::Alg2}]\label{thm::PrivacyAlg2}
 Let the interaction topology $\GG$ of the agents implementing algorithm~\eqref{eq::Alg2}, initialized at $x^i(0)\in\real$ and $v^i(0)=0$, $i\in\VV$, be strongly connected and weight-balanced digraph. Let the knowledge set of malicious agent $1$ be as in Definition~\ref{asm::knowledgesetadversary}.  Privacy of agent $i\in\VV\backslash\{1\}$ is preserved from agent $1$ if and only if agent $i$ has an out-neighbor that is not an out-neighbor of agent 1.
\end{theorem}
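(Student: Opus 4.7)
The plan is to assemble Theorem~\ref{thm::PrivacyAlg2} as a direct combination of Lemma~\ref{lem::PrivacyAlg2} and Lemma~\ref{lem::PrivacyAlg2_observ}, each of which already handles one direction of the equivalence in the ``worst case'' where the malicious agent receives $x^i$ directly.

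For the sufficient direction (the ``if''), I would suppose that agent $i$ has at least one out-neighbor $j^\star\in\mathcal{N}_{\text{out}}^i$ with $j^\star\notin\mathcal{N}_{\text{out}}^1$. This is exactly the hypothesis of Lemma~\ref{lem::PrivacyAlg2} (with the roles of agents $2$ and $3$ in that lemma played by $i$ and $j^\star$, respectively). The lemma then guarantees the existence of infinitely many executions of algorithm~\eqref{eq::Alg2}, with $\mathsf{r}^i$ arbitrarily far from its true value, that produce identical trajectories for every message reaching agent~$1$ from its out-neighbors. By Definition~\ref{asm::privacy} the privacy of agent~$i$ is preserved. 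If instead $i\notin\mathcal{N}_{\text{out}}^1$ the conclusion is only stronger, since agent~$1$ does not even observe $x^i$ directly; hence one need only handle the worst case that Lemma~\ref{lem::PrivacyAlg2} already treats.

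For the necessary direction (the ``only if''), I would argue by contrapositive: assume $\mathcal{N}_{\text{out}}^i\subseteq\mathcal{N}_{\text{out}}^1$ and exhibit a deterministic procedure by which agent~$1$ recovers $\mathsf{r}^i$ exactly. This is precisely the content of Lemma~\ref{lem::PrivacyAlg2_observ} applied with $o=i$: the hypothesis that every out-neighbor of $i$ is an out-neighbor of agent~$1$ gives agent~$1$ access to $\{x^j(k-1)\}_{j\in\mathcal{N}_{\text{out}}^i}$, and in the adversarial worst case relevant to privacy ($i\in\mathcal{N}_{\text{out}}^1$) agent~$1$ also sees $x^i(k)$ and $x^i(k-1)$. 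Using $v^i(0)=0$ together with~\eqref{eq::Alg2a}, agent~$1$ can recursively reconstruct $v^i(k)$ from the $x$-trajectories alone, and then invert~\eqref{eq::Alg2b} to solve for $\mathsf{r}^i$ at any $k\geq1$. Since the coefficient of $\mathsf{r}^i$ in~\eqref{eq::Alg2b} is the nonzero scalar $\Delta$, this inversion is unique, and the exact recovery of $\mathsf{r}^i$ contradicts Definition~\ref{asm::privacy}, completing the contrapositive.

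The main obstacle is really one of bookkeeping rather than substance: one must observe that Lemma~\ref{lem::PrivacyAlg2} and Lemma~\ref{lem::PrivacyAlg2_observ} both operate in the same adversarial regime (agent~$1$ having direct observation of $x^i$), so the topological condition on $\mathcal{N}_{\text{out}}^i$ versus $\mathcal{N}_{\text{out}}^1$ is simultaneously necessary and sufficient within that regime. Once this alignment is noted, no further computation is needed beyond invoking the two lemmas.
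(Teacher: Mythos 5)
Your proposal matches the paper's own treatment: the paper derives Theorem~\ref{thm::PrivacyAlg2} directly as the combination of Lemma~\ref{lem::PrivacyAlg2} (sufficiency) and Lemma~\ref{lem::PrivacyAlg2_observ} (necessity, via the observer that reconstructs $v^i(k)$ from $v^i(0)=0$ and then inverts~\eqref{eq::Alg2b} to isolate $\mathsf{r}^i$), with no additional argument. Your write-up in fact supplies slightly more detail than the paper, which states only that the two lemmas ``give'' the theorem.
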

\begin{remark}[Privacy preservation for an initialization free average consensus algorithm for connected undirected graphs
%Equivalent level of Privacy Preservation in algorithms~\eqref{eq::Alg2} and~\eqref{eq::Alg3}
]\label{rmk::PrivacyAlg3}
{\rm
We can show through similar arguments that the privacy preservation guarantees of algorithm~\eqref{eq::Alg2} hold for the average consensus algorithm~\eqref{eq::Alg3} as well,
%Let the interaction topology $\GG$ of the agents implementing algorithm~\eqref{eq::Alg3}, initialized at $x^i(0),v^i(0)\in\real$, $i\in\VV$, be a connected undirected graph.
\begin{subequations}\label{eq::Alg3}
\begin{align}
    v^i(k+1)=\,&v^i(k)-\Delta\sum\nolimits_{j=1}^{N}\vectsf{a}_{ij}(x^{i}(k)-x^{j}(k)),\label{eq::Alg3a}\\
    x^i(k+1)=\,&x^i(k)+\Delta\big(-(x^i(k)-\mathsf{r}^i)\nonumber\\
    &-\sum\nolimits_{j=1}^{N}\vectsf{a}_{ij}(x^{i}(k)-x^{j}(k))+\sum\nolimits_{j=1}^{N}\vectsf{a}_{ij}(v^{i}(k)-v^{j}(k))\big),\label{eq::Alg3b}\\
    &~~~x^i(0), v^i(0)\in\real, ~ i\in\VV.
\end{align}
\end{subequations}
Algorithm~\eqref{eq::Alg3} does not require special initialization of $v^i(0)=0$, $i\in\VV$, therefore it is robust to agent drop-off. However, this algorithm works over connected undirected graphs, and also requires an extra message exchange between the neighboring agents. 
}\boxend
%The privacy preservation results found for algorithm~\eqref{eq::Alg2} in Theorem~\ref{thm::PrivacyAlg2} are equivalent for algorithm~\eqref{eq::Alg3}. Note that algorithm~\eqref{eq::Alg3}, unlike algorithm~\eqref{eq::Alg2}, does not require the condition $\sum\nolimits_{i=1}v^i(0)=0$ for its initial condition.
\end{remark}
So far, we have shown the intrinsic privacy property of Algorithm~\eqref{eq::Alg2}. Now the natural question is whether we can loosen the topology restriction of Theorem~\ref{thm::PrivacyAlg2} using additive perturbation signals and preserve privacy for agents whose all communication signals (in-coming and out-going) are available to the malicious agent. Herein, we show that this mechanism has no contribution and the malicious agent can still derive local information asymptotically. A comprehensive protection via perturbation signals suggests adding a signal $g^i(k):\mathbb{Z}_{\geq0}\to\real$ to the \review{transmitted} message of every agent $i\in\VV$, and additive signal $f_1^i(k):\mathbb{Z}_{\geq0}\to\real$ and $f_2^i(k):\mathbb{Z}_{\geq0}\to\real$ to the right hand side of~\eqref{eq::Alg2a} and~\eqref{eq::Alg2b}, respectively. However, without loss of generality, the effect of all these signals can be captured via adding only a dynamic perturbation signal $f^i(k):\mathbb{Z}_{\geq0}\to\real$, $i\in\VV$ to the right hand side of~\eqref{eq::Alg2b} as follows,
    \begin{subequations}\label{eq::perturbedAlg2}
    \begin{align}
        &v^i(k+1)= v^i(k)+\Delta \sum\nolimits_{j=1}^{N}\vectsf{a}_{ij}(x^{i}(k)-x^{j}(k)),\label{eq::perturbedAlg2a}\\
        &x^i(k+1)=x^i(k)+\Delta(-(x^i(k)-\mathsf{r}^i)\nonumber\\
        &\quad\quad  -\sum\nolimits_{j=1}^{N}\vectsf{a}_{ij}(x^{i}(k)-x^{i}(k))-v^i(k)+f^i(k)).\label{eq::perturbedAlg2b}
    \end{align}
    \end{subequations}
    Instead of using a particular perturbation signal as in~\cite{manitara2013privacy, mo2016privacy,he2018privacy}, we follow the approach in~\cite{NR-SSK:19arxiv} and first investigate for what set of perturbation signals, which we call \emph{admissible perturbation signals}, the final convergence point of the algorithm is not perturbed. It is natural that any necessary condition defining the admissible perturbation signal is known to all the agents. Then, we show that by knowing a necessary condition on the perturbation, the malicious agent can employ an observer  to obtain the reference input regardless of what the exact additive admissible perturbation signal the agent uses.

\begin{theorem}[A necessary condition on admissible perturbation signals]\label{thm::Alg2perturbedDis}
Let the interaction topology $\GG$ of the agents implementing algorithm~\eqref{eq::perturbedAlg2} with $\Delta\in(0,\min\{2,\bar{\Delta}\})$, initialized at $x^i(0)\in\real$ and $v^i(0)=0$, $i\in\VV$, be strongly connected and weight-balanced digraph. Let the knowledge set of malicious agent $1$ be as in Definition~\ref{asm::knowledgesetadversary}. A necessary condition for preserving the average consensus, i.e., $\lim_{k\to\infty}x^i(k)=\mathsf{r}^{\text{avg}}$ for $i\in\VV$ is
    \begin{align}\label{eq::perturbedAlg2conditionsDisCol}
        &\lim_{k\to\infty}\sum\nolimits_{i=1}^{N}\sum\nolimits_{m=0}^{k}(1-\Delta)^{k-m}f^{i}(m)=0,\quad i\!\in\!\VV.
    \end{align}
\end{theorem}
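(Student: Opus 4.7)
The plan is to exploit the weight-balanced property of $\mathcal{G}$ to reduce the multi-agent perturbed dynamics to a single scalar recursion on $S_x(k)\coloneqq\sum_{i=1}^N x^i(k)$, and then read off the necessary condition from the solution of that recursion.

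First I would sum equation~\eqref{eq::perturbedAlg2a} over $i\in\VV$. Because the digraph is weight-balanced, the double sum $\sum_{i=1}^N\sum_{j=1}^N \vectsf{a}_{ij}(x^i(k)-x^j(k))$ vanishes (each term $\vectsf{a}_{ij} x^i$ contributes to the out-degree of $i$, while $\vectsf{a}_{ij}x^j$ contributes to the in-degree of $j$, and these are equal). Combined with the initialization $v^i(0)=0$, this gives $S_v(k)\coloneqq\sum_{i=1}^N v^i(k)\equiv 0$ for every $k\in\integernonnegative$.

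Next I would sum equation~\eqref{eq::perturbedAlg2b} over $i\in\VV$. The Laplacian term again drops out by the same weight-balanced argument, and using $S_v(k)=0$, one obtains the scalar recursion
\begin{equation*}
S_x(k+1)=(1-\Delta)\,S_x(k)+\Delta\,N\mathsf{r}^{\text{avg}}+\Delta\,S_f(k),
\end{equation*}
where $S_f(k)\coloneqq\sum_{i=1}^N f^i(k)$. Setting $\tilde{S}_x(k)\coloneqq S_x(k)-N\mathsf{r}^{\text{avg}}$, this becomes $\tilde{S}_x(k+1)=(1-\Delta)\tilde{S}_x(k)+\Delta\,S_f(k)$, whose closed-form solution is
\begin{equation*}
\tilde{S}_x(k+1)=(1-\Delta)^{k+1}\tilde{S}_x(0)+\Delta\sum_{m=0}^{k}(1-\Delta)^{k-m}S_f(m).
\end{equation*}

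Finally, average consensus requires $x^i(k)\to\mathsf{r}^{\text{avg}}$ for every $i\in\VV$, and in particular $S_x(k)\to N\mathsf{r}^{\text{avg}}$, i.e., $\tilde{S}_x(k)\to 0$. Because $\Delta\in(0,\min\{2,\bar{\Delta}\})\subset(0,2)$, we have $|1-\Delta|<1$, so the homogeneous term $(1-\Delta)^{k+1}\tilde{S}_x(0)$ vanishes as $k\to\infty$. Since $\Delta\neq 0$, the only way to drive $\tilde{S}_x(k)$ to zero is to have $\lim_{k\to\infty}\sum_{m=0}^{k}(1-\Delta)^{k-m}S_f(m)=0$, which on substituting $S_f(m)=\sum_{i=1}^N f^i(m)$ is exactly~\eqref{eq::perturbedAlg2conditionsDisCol}. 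The proof is essentially routine; the only potentially delicate point is the justification that the weight-balanced hypothesis annihilates both Laplacian-like sums simultaneously, which is what allows decoupling into the scalar dynamics above.
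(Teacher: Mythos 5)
Your proof is correct and is essentially the paper's argument carried out in elementary form: the quantity $\tilde{S}_x(k)=\sum_{i=1}^N x^i(k)-N\mathsf{r}^{\text{avg}}$ is exactly $\sqrt{N}\,\mathsf{p}_1(k)$ in the paper's change of coordinates, so your scalar recursion and its closed-form solution coincide (up to that constant factor) with the paper's equation for $\mathsf{p}_1$, and the final limiting argument is identical. Obtaining the decoupled scalar dynamics by direct summation and the weight-balanced identity $\sum_{i,j}\vectsf{a}_{ij}(x^i-x^j)=0$, rather than through the $[\vect{\mathfrak{r}}~\vectsf{\mathfrak{R}}]$ coordinate transformation, is a valid and slightly more self-contained route to the same necessary condition.
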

\begin{proof}
Using the change of variable introduced in Section II,~\eqref{eq::perturbedAlg2} can be written as
\begin{subequations}%\label{eq::perturbedAlg2transformedDis}
\begin{align*}
    \mathsf{q}_1(k+1)&= \mathsf{q}_1(k)=0,\\
    \vectsf{q}_{2:N}(k+1)&= (1-\Delta)\vectsf{q}_{2:N}(k) +\Delta\,\vectsf{\mathfrak{R}}^\top\vectsf{f}(k),\\
    \mathsf{p}_1(k+1)&= (1-\Delta)\mathsf{p}_1(k)+\Delta\,\vectsf{\mathfrak{r}}^\top\vectsf{f}(k),\\
    \vectsf{p}_{2:N}(k+1)&=(\vect{I}-\Delta\lL^{+}) \vectsf{p}_{2:N}(k)\!-\!\Delta\vectsf{q}_{2:N}(k)+\Delta\vectsf{\mathfrak{R}}^\top\vectsf{f}(k),
\end{align*}
\end{subequations}
where we used $\mathsf{q}_1(0)\!=\!\sum_{i=0}^{N}v^i(0)=0$. Then, 
%\begin{subequations}
\begin{align}
%    \mathsf{q}_{2:N}(k)=&(1-\Delta)^{k}\vectsf{q}_{2:N}(0\nonumber\\
%    &-\sum\nolimits_{m=0}^{k}(1-\Delta)^{k-m}\Delta(-\vectsf{\mathfrak{R}}^\top\vectsf{f}(m)),\label{eq::q2:N}\\
    \mathsf{p}_1(k)=(1-\Delta)^{k}\mathsf{p}_1(0)
    -\sum\nolimits_{m=0}^{k}(1-\Delta)^{k-m}\Delta(-\vectsf{\mathfrak{r}}^\top\vectsf{f}(m)).\label{eq::p1}
%    \vectsf{p}_{2:N}(k)=&(\vect{I}-\Delta\vectsf{L}^{+})^{k}\vectsf{p}_{2:N}(0)+\sum\nolimits_{m=0}^{k}(\vect{I}-\Delta\vectsf{L}^{+})^{k-m}\\&\bigg(-\Delta(1-\Delta)^{m}\vectsf{q}_{2:N}(0)\\
%    &+\Delta\vectsf{\mathfrak{R}}^\top\sum\nolimits_{l=0}^{m}\big((1-\Delta)^{m-l}\vectsf{f}(l)\big)+\Delta\vectsf{\mathfrak{R}}^\top\vectsf{f}(m)\bigg).
\end{align}
%\end{subequations}
Since $\left[\begin{smallmatrix}\mathsf{p}_1 \\\vectsf{p}_{2:N}\end{smallmatrix}\right]=\left[\begin{smallmatrix}\vectsf{\mathfrak{r}}^\top \\ \vectsf{\mathfrak{R}}^\top\end{smallmatrix}\right](\vect{x}-\mathsf{r}^{\text{avg}}\vectsf{1}_N)$, the necessary condition for reaching average consensus then is $\lim_{k\to\infty}\vectsf{p}(k)=\vectsf{0}$. Given that $\lim_{k\to\infty}(1-\Delta)^k=0$, then it follows from~\eqref{eq::p1} that \eqref{eq::perturbedAlg2conditionsDisCol} is a necessary condition for $\lim_{k\to\infty}\mathsf{p}_1(k)=0$.
\end{proof}
\begin{remark}{\rm
 Condition~\eqref{eq::perturbedAlg2conditionsDisCol} that defines the admissible perturbation signals couples the choices of all the agents. In a decentralized setting, since there is no third-party to assign local perturbation signals $f^i$, for agents to satisfy~\eqref{eq::perturbedAlg2conditionsDisCol} without any coordination among themselves, agents choose their admissible perturbations according to  \begin{align}\label{eq::perturbedAlg2conditionsDis}
        &\lim_{k\to\infty}\sum\nolimits_{m=0}^{k}(1-\Delta)^{k-m}f^{i}(m)=0,\quad i\in\VV.
    \end{align}}\boxend
\end{remark}

In our next statement, we investigate whether a malicious agent can derive the local reference value of an agent if it knows condition~\eqref{eq::perturbedAlg2conditionsDis} and all the \review{transmitted} messages to and from the agent.

\begin{theorem}[\review{Use of locally chosen} perturbation signals in~\eqref{eq::perturbedAlg2} does not increase privacy protection level]\label{thm::adversarialobserverdesignperturbed}
     Let the interaction topology $\GG$ of the agents implementing algorithm~\eqref{eq::perturbedAlg2} with $\Delta\in(0,\min\{2,\bar{\Delta}\})$, initialized at $x^i(0)\in\real$ and $v^i(0)=0$, $i\in\VV$, be strongly connected and weight-balanced digraph. Suppose agents are implementing \review{locally chosen admissible perturbation signals that satisfy  \eqref{eq::perturbedAlg2conditionsDis}, and} let the knowledge set of malicious agent $1$ be as in Definition~\ref{asm::knowledgesetadversary}. If agent $1$ is the in-neighbor of agent $i\in\VV\backslash\{1\}$ and all its out-neighbors, then agent $1$ can employ the observer
     \begin{subequations}\label{eq::Alg2obs}
     \begin{align}
        \hat{v}^{i}(k+1)&=\hat{v}^{i}(k)-\Delta\sum\nolimits_{j=1}^{N}\vectsf{a}_{ij}(x^{i}(k)-x^j(k)),\\
        \hat{x}^{i}(k+1)&=\hat{x}^{i}(k)+\Delta\big(\sum\nolimits_{j=1}^{N}\vectsf{a}_{ij}(x^{i}(k)-x^j(k))\nonumber\\
        &+\hat{v}^{i}(k)-\hat{x}^{i}(k)\big),\\
        \hat{z}^{i}(k)&=\hat{x}^{i}(k)+x^{i}(k),\label{eq::hatZ}
    \end{align}
    \end{subequations}
    initialized at $\hat{v}^i(0)\!=\!\hat{x}^i(0)\!=\!0$ to have $\hat{z}^i(k)\!\to\!\mathsf{r}^i$ as $k\!\to\!\infty$.
\end{theorem}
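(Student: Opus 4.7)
My plan is to show that the observer~\eqref{eq::Alg2obs} is, up to an algebraic change of variables, a first-order low-pass filter whose impulse response is precisely the kernel $(1-\Delta)^{k-m}$ appearing in the admissibility condition~\eqref{eq::perturbedAlg2conditionsDis}. Driven by the signal $r^i+f^i(k)$ that the adversary can effectively reconstruct, this filter will asymptotically recover the constant $r^i$ while averaging out any admissible $f^i$.

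I would begin by verifying that agent $1$ can actually run~\eqref{eq::Alg2obs}. Because agent $1$ is an in-neighbor of $i$ and of every $j\in\mathcal{N}_{\text{out}}^{i}$, the samples $x^i(k)$ and all $x^j(k)$ with $j\in\mathcal{N}_{\text{out}}^{i}$ are in its knowledge set, so the only exogenous quantity entering~\eqref{eq::Alg2obs}, namely $\sum_{j=1}^{N}\vectsf{a}_{ij}(x^i(k)-x^j(k))$, is fully computable by agent $1$. Next, I would compare the $\hat{v}^i$ update in~\eqref{eq::Alg2obs} with the $v^i$ update in~\eqref{eq::perturbedAlg2a}: they are driven by the same quantity $\sum_j\vectsf{a}_{ij}(x^i-x^j)$, and both are initialized at zero, so their sum evolves trivially and yields an algebraic identity between $\hat{v}^i(k)$ and $v^i(k)$ valid for every $k\in\mathbb{Z}_{\geq 0}$.

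The central step is then to assemble the dynamics of $\hat{z}^i=\hat{x}^i+x^i$ by adding the $\hat{x}^i$ update in~\eqref{eq::Alg2obs} to the $x^i$ update in~\eqref{eq::perturbedAlg2b}. The two copies of $\Delta\sum_j\vectsf{a}_{ij}(x^i-x^j)$ appear with opposite signs and cancel, and after substituting the identity from the previous step, the contributions from $v^i$ and $\hat{v}^i$ also collapse. What survives is the scalar linear recursion
\begin{align*}
\hat{z}^i(k+1)=(1-\Delta)\,\hat{z}^i(k)+\Delta\bigl(r^i+f^i(k)\bigr),\qquad \hat{z}^i(0)=0,
\end{align*}
whose closed-form solution is
\begin{align*}
\hat{z}^i(k)=r^i\bigl(1-(1-\Delta)^k\bigr)+\Delta\sum_{m=0}^{k-1}(1-\Delta)^{k-1-m}\,f^i(m).
\end{align*}

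To conclude, I would take $k\to\infty$. The first term tends to $r^i$ since the standing stepsize assumption $\Delta\in(0,\min\{2,\bar{\Delta}\})$ forces $|1-\Delta|<1$, and the second term is, up to a one-step index shift, exactly the weighted sum appearing in~\eqref{eq::perturbedAlg2conditionsDis} and therefore tends to zero by admissibility of $f^i$. Hence $\hat{z}^i(k)\to r^i$, so locally chosen admissible perturbations offer no additional protection. The step I expect to require the most care is the collapsing of the $v^i,\hat{v}^i$ terms: one must track the signs in~\eqref{eq::Alg2obs} closely so that the two $v$-type contributions actually cancel rather than reinforce, and one must align the discrete convolution index $k-1-m$ in the solution with the kernel exponent $k-m$ in~\eqref{eq::perturbedAlg2conditionsDis} exactly, so that no residual non-vanishing constant is left in the limit.
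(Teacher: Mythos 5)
Your proposal follows essentially the same route as the paper's proof: sum the $\hat{x}^i$ update with~\eqref{eq::perturbedAlg2b} to obtain the scalar recursion $\hat{z}^i(k+1)=(1-\Delta)\hat{z}^i(k)+\Delta(\mathsf{r}^i+f^i(k))$, solve it in closed form, and conclude from $|1-\Delta|<1$ and the admissibility condition~\eqref{eq::perturbedAlg2conditionsDis} (your indexing of the convolution sum is in fact cleaner than the paper's). Your flagged concern about the $v$-type terms is exactly the right place to look: with the observer precisely as printed, $v^i(k)+\hat{v}^i(k)\equiv 0$ so $\hat{v}^i(k)=-v^i(k)$ and the contributions $-v^i(k)+\hat{v}^i(k)=-2v^i(k)$ reinforce rather than cancel; the paper's own proof writes the recursion with the residual term $-(v^i(k)-\hat{v}^i(k))$ and then silently drops it, which is valid only if one sign in~\eqref{eq::Alg2obs} is flipped so that $\hat{v}^i(k)\equiv v^i(k)$ --- once that (apparent) typo is repaired, your argument and the paper's coincide.
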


\begin{proof}
  By substitution, from~\eqref{eq::hatZ} we obtain $$\hat{z}^{i}(k+1)=\hat{z}^{i}(k)+\Delta(-\hat{z}^{i}(k)-(v^{i}(k)-\hat{v}^{i}(k))+\mathsf{r}^i+f^{i}(k)),$$
    which gives us $$\hat{z}^{i}(k)=(1-\Delta)^{k}\hat{z}^{i}(0)+\Delta\sum_{m=0}^{k}(1-\Delta)^{k-m}(\mathsf{r}^i+f^{i}(m)).$$
    We note that using the sum of geometric series we can write $$\sum_{m=0}^{k}(1-\Delta)^{k-m}=\frac{1-(1-\Delta)^{k+1}}{\Delta}.$$ Then, since $|1-\Delta|<1$ and given the necessary condition~\eqref{eq::perturbedAlg2conditionsDis} the observer converges to the local reference value of agent $i$, i.e., $\hat{z}^i(k)\to\mathsf{r}^i$ as $k\to\infty$.
\end{proof}
We proved that an additive perturbation signal has no contribution in the level of privacy provided by algorithm~\eqref{eq::Alg2}. 

\section{Demonstration Examples}\label{sec::num}

To provide a context to appreciate the implications of our privacy preserving solution for average consensus problem versus the method of \cite{mo2016privacy}, we consider an optimal power dispatch (OPD) problem over the undirected connected graph in Fig.~\ref{fig::digraph}(b).~\cite{mo2016privacy} is a representative of the common method of privacy preservation via additive noises for algorithm~\eqref{eq::Alg1}. We conduct our study over an undirected connected graph since the results in~\cite{mo2016privacy} are established only for such graphs. In our OPD problem a group of generators interacting over the graph of Fig.~\ref{fig::digraph}(b) are expected to meet the demand of $\mathcal{P}_D=1500$ MW ($p^1+\cdots+p^5=1500$) while minimizing their collective cost $\sum_{i=1}^5 f^i(p^i)$. The parameters of the local cost  $f^i(p^i)=\frac{1}{2\beta^i}(p^i+\alpha^i)^2+\gamma^i$, $i\in\{1,\cdots,5\}$, are chosen from the IEEE bus 118 generator list according to corresponding components of $\{\alpha^i\}_{i=1}^4=\{188.3,592.5,2567.2,1793.3,2567.2\}$, $\{\beta^i\}_{i=1}^5=\{7.17,45.9,208.2,166.6,208.2
\}$. 
Here $\alpha^i$ and $\beta^i$ are supposed to be the private value of each agent $i\in\VV$. The optimal solution of this problem for each agent $i\in\{1,\cdots,5\}$ is given by $p^{i\star}=\beta^i\frac{\mathcal{P}_D+\sum_{i=1}^N\alpha^i}{\sum_i^N\beta^i}-\alpha^i$~\cite{AJW-BFW-GBS:14}. To generate this solution in a distributed manner, let us assume that these $N=5$ agents employ two static average consensus algorithms to obtain $\bar{\alpha}=\frac{1}{N}\sum_{i=1}^N\alpha^i$ and $\bar{\beta}=\frac{1}{N}\sum_i^N\beta^i$.
\begin{figure}[t!]
    \centering
    {\!\!\!\!\!\includegraphics[trim= 1pt 5pt 16pt 5pt ,clip,width=0.45\linewidth]{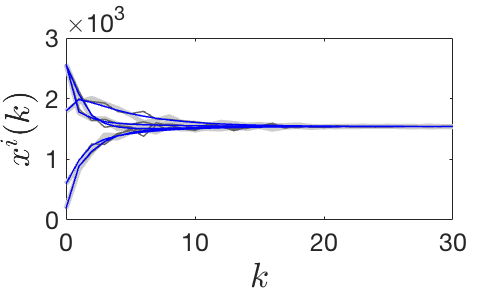}\label{Fig1a}}\quad
    {\!\!\!\!\!\includegraphics[trim= 1pt 5pt 16pt 5pt ,clip,width=0.45\linewidth]{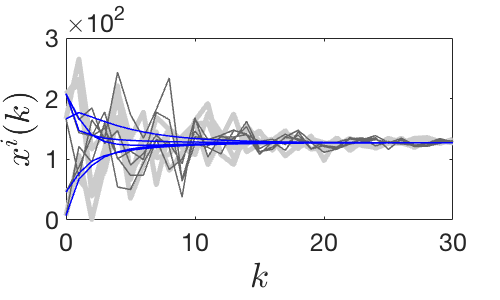}\label{Fig1b}}
    \caption{{\small Trajectories of the average consensus algorithms to obtain $\bar{\alpha}$ (left plot) and $\bar{\beta}$ (right plot). The thick gray lines and the thin gray lines show trajectories of two executions of method M1\review{~\cite[Corollary 1]{mo2016privacy}}, each employing a different noise realization. The blue lines show the trajectories of the agreement state of algorithm~\eqref{eq::Alg2}.
    }}
    \label{fig::Fig1}
\end{figure}
Then, by knowing $N$ and $\mathcal{P}_D$, agents have all the information to obtain $p^{i\star}$. Fig.~\ref{fig::digraph}(b) is 
%also
used in the numerical example of \cite{mo2016privacy} where agent $5$ is the malicious agent, so as we assume here.
%Here, let us assume also that agent $5$ is the malicious agent.
The privacy of agents $\{1,2,3\}$ is preserved if agents use algorithm~\eqref{eq::Alg1} augmented by the additive noise method of~\cite[Corollary 1]{mo2016privacy} (hereafter referred to as  method M1) or algorithm~\eqref{eq::Alg2} (see Theorem~\ref{thm::PrivacyAlg2}). Let us assume that when using the method M1, the agents use the same $\phi=0.9$ as~\cite[Section VI]{mo2016privacy} and given the value of their parameters, they agree to use an additive Gaussian noise with mean $0$ and standard deviation $\sigma=100$. Note that these values should be common for all agents, and thus agents need to coordinate to choose them. Fig.~\ref{fig::Fig1} shows the trajectories generated by method M1 for two different executions (each uses a different noise realization) overlaid over the trajectories generated by Algorithm~\eqref{eq::Alg2}. As seen, the trajectories of method M1 are quite noisy and convergence is slower. Using method M1, malicious agent $5$ cannot obtain the exact value of $\{\alpha^i\}_{i=1}^3$ and $\{\beta^i\}_{i=1}^3$ because as shown in~\cite[Fig.4]{mo2016privacy} the covariance of a maximum likelihood estimator that agent $5$ uses to obtain the private reference value of $\{1,2,3\}$ has a steady state non-vanishing value, see Fig.~\ref{fig::covariance_mo}. However, agent $5$ knows that in $99.7$\% of the times ($3\sigma$ rule) the error rate to obtain each $
\{\alpha^i\}_{i=1}^3$ and $\{\beta^i\}_{i=1}^3$ is respectively $(0.972\%,0.618\% ,0.071\%)$ and $(25.512\%,7.972\%,0.879\%)$ according to the computed normalized covariances $P_{ii}$ as in Fig.~\ref{fig::covariance_mo}. Given the numerical values of these parameters, this level of protection gives a good approximate value of the optimal operating point of the supposedly private agents $\{1,2,3\}$ to the malicious agent. %Here, we note that we used $\phi=0.9$ when generating the local noise of the agents according to the model of of~\cite{mo2016privacy}; thus the performance of maximum likelihood estimator of agent $5$ is the same as the example of~\cite{mo2016privacy}.
On the other hand, the privacy preservation guarantees that algorithm~\eqref{eq::Alg2} provides for agents $\{1,2,3\}$ is stronger, as the malicious agent cannot obtain any estimate on the private values of the agent. See Fig.~\ref{fig::Alg1} for an example scenario, where two alternative cases of reference input signals for $\alpha^i$ denoted by 
%$ \mathsf{\alpha}_1=(188.3,592.5,2567.2,1793.3,2567.2)$, 
$\{\alpha_2^i\}_{i=1}^5\!=\!\{-1311.6,3592.5,1067.2,1793.3,2567.2\}$, and $\{\alpha_3^i\}_{i=1}^5\!=\!\{1688.3,-2407.4,4067.2,1793.3,2567.2\}$.
%\begin{align*}
%    \mathsf{r}_1&=(188.3,592.5,2567.2,1793.3,2567.2),\\
%    \mathsf{r}_2&=(-1311.6,3592.5,1067.2,1793.3,2567.2),\\
%    \mathsf{r}_3&=(1688.3,-2407.4,4067.2,1793.3,2567.2),
%\end{align*}
where $\frac{1}{N}\sum_{i=1}^5\!{\alpha}_j^i\!=\!\frac{1}{N}\sum_{i=1}^5\!\alpha_1^i$, $j\!\in\!\{2,3\}$ result in exactly the same \review{transmit} message to the malicious agent $5$.   Therefore, agent $5$ cannot distinguish between these different references for agents $\{1,2,3\}$. Here, $\{\alpha_1^i\}_{i=1}^5$ are the actual inputs given in the OPD problem definition earlier.
\begin{figure}[t!]
    \centering
   {\includegraphics[width=0.55\linewidth]{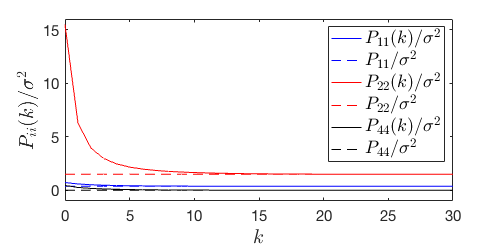}\label{Fig2a}}
   \caption{{\small Normalized covariances of the maximum likelihood estimator that the malicious agent uses to obtain an estimate of reference input of the agents implementing method M1. The dashed line is the steady-state value and the solid lines show the time history. This plot is the same as~\cite[Fig. 4]{mo2016privacy} where $\sigma\!=\!1$ is~used.}}
   \label{fig::covariance_mo}
\end{figure}

\begin{figure}[t!]
    \centering
    \subfloat[$\{x_j^i(k)\}_{i=1}^5,j\!\in\!\{1,2,3\}$ vs. $k$ are shown, respectively, by the dashed blue, the thin gray and the thick gray lines.]{\includegraphics[width=0.45\linewidth]{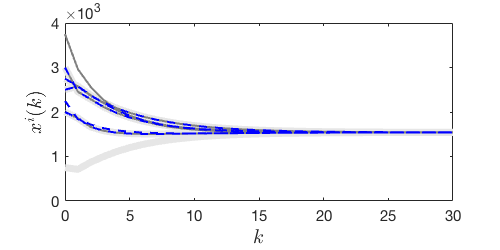}\label{fig::MoCovariance-a}}~~~
    \subfloat[$e_x^i(k)
    =x_1^i(k)-x_2^i(k)$
    vs. $k$]{\includegraphics[width=.26\linewidth]{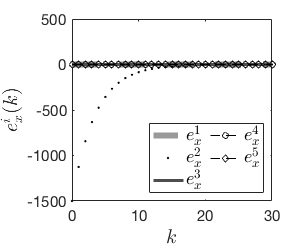}\label{fig::MoCovariance-b}}~~~
    \subfloat[$e_x^i(k)
    =x_1^i(k)-x_3^i(k)$
    vs. $k$]{\includegraphics[width=.26\linewidth]{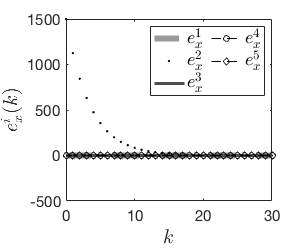}\label{fig::MoCovariance}}
    \caption{{\small Three executions of~\eqref{eq::Alg2} yield exactly the same received signals by the malicious agent $5$, while converging to the same average value of $\frac{1}{N}\sum_{i=1}^5\alpha^i$.}}
    \label{fig::Alg1}
\end{figure}
\begin{figure}
    \centering
\end{figure}

\section{Conclusion}
We considered the problem of privacy preservation in the static average consensus problem. This problem normally is solved by proposing privacy preservation mechanism that are added to the popular first order Laplacian-based algorithm. These mechanisms come with computational overheads or pre-coordinating among the agents to choose the parameters of the algorithm. They also alter the transient response of the algorithm. In this paper we showed that an alternative algorithm that is proposed in the literature in the context of dynamic average consensus problem can be a simple solution for privacy preservation for average consensus problem. The advantage of our proposed solution over existing results in the literature was to provide a stronger notion of privacy while rendering a similar transient and convergence behavior to that of the well-known Laplacian average consensus algorithm, having no need for coordination to choose the parameters of the algorithm, and no extra~computations. 

\bibliographystyle{ieeetr}

\end{document}